\newcommand{\mc}{\mathcal}
\newcommand{\ms}{\mathsf}
\newtheorem{theorem}{Theorem}
\newtheorem{definition}{Definition}
\begin{document}
\title{Many-Broadcast Channels: Definition and Capacity in the Degraded Case}

\author{Tsung-Yi Chen\IEEEmembership{Member,~IEEE,}, Xu Chen~\IEEEmembership{Student Member,~IEEE,}and Dongning Guo~\IEEEmembership{Senior~Member,~IEEE}\\
Department of Electrical Engineering and Computer Science\\
Northwestern University, Evanston, IL 60208, USA
}

\maketitle

\begin{abstract}
\boldmath Classical multiuser information theory studies the fundamental limits of models with a fixed (often small) number of users as the coding blocklength goes to infinity.  Motivated by emerging systems with a massive number of users, this paper studies the new {\em many-user paradigm}, where the number of users is allowed to grow with the blocklength.  The focus of this paper is the degraded many-broadcast channel model, whose number of users may grow as fast as linearly with the blocklength. A notion of capacity in terms of message length is defined and an example of Gaussian degraded many-broadcast channel is studied. In addition, a numerical example for the Gaussian degraded many-broadcast channel with fixed transmit power constraint is solved, where every user achieves strictly positive message length asymptotically. 
\end{abstract}

\section{Introduction}
\label{sec:Intro}
Multiuser information theory studies the fundamental limits of communication systems with multiple sources, transmitters and/or receivers. The capacity region is characterized by studying the asymptotic regime with coding blocklength $n$ growing to infinity for a fixed number of users $k$. The theory lays the foundation of designing multiuser systems such as cellular networks and wireless ad hoc networks. Several prior works that study large systems also consider the case where the number of users $k$ is taken to infinity \textit{after} the blocklength is taken to infinity. 

In general, the theory that assumes a fixed number of users does not apply to systems where the number of users is comparable or even larger than the blocklength, such as in some sensor networks or machine-to-machine (M2M) communication systems with many thousands of devices
in a given cell. A key reason is that for many functions $f(k,n)$, letting $k\to\infty$ after $n\to\infty$ may yield a different
result than letting $n$ and $k=k_n$ (as a function of $n$)
simultaneously tend to infinity,\footnote{Take the function $f(n,k)=\log(1+k/n)$ as an example. Taking the limits separately gives $0$ or $\infty$ while taking the limit simultaneously with $k_n = n$ yields $\lim\limits_{n\to\infty} f(n,n) = \log 2$.} i.e., 
\begin{align} \label{eq:fkn}
\lim_{k\to\infty}\lim_{n\to\infty} f(k,n) \ne \lim_{n\to\infty} f(k_n,n)\,.
\end{align}
This new paradigm in multiuser information theory models where $k_n$ can grow arbitrarily large with $n$ is referred to as the many-user regime.

One motivating example is the design of ultra-scalable M2M communication systems where the number of users $k$ is comparable or even larger than the blocklength $n$, and the message transmitted to each user could be very short. The many-user regime therefore becomes a better performance indicator in the context of M2M communication where $k_n = O(n)$ and the number of bits to be transmitted for each user may be sub-linear in $n$. We are interested in the fundamental limits in this regime. Yet the rate for each user vanishes as $k_n$ grows, indicating that the traditional notion of capacity in bits per channel use becomes ill-suited for the task.

M2M communication represents an example where we must review the transition from multiuser to many-user systems carefully. Similar effects have been observed before in the multiuser information theory literature as Cover and Thomas, for example, noted in~\cite[p.~546]{cover1991elements} for the Gaussian multiaccess channel with per-user power constraint, ``when the total number of senders is very large, so that there is a lot of interference, we can still send a total amount of information that is arbitrary large even though the rate per individual sender goes to 0.'' Similar effects appear in Gaussian broadcast channels as we increase the number of users. Therefore it is crucial to identify a suitable notion of capacity in order to understand the fundamental limits in the many-user paradigm.

A many-user channel model that parallels the Gaussian multiaccess channels, referred to as the Gaussian many-access channel, was studied in \cite{Chen_gaussian_2013}. The symmetric capacity was defined in terms of the message lengths. An achievability scheme using maximum-likelihood (ML) decoding for the scaling of $k_n = o(n)$ was shown in \cite{Chen_gaussian_2013} and the scaling of $k_n = O(n)$ was shown in \cite{chen2014gaussian}. This paper studies the fundamental limit of the message length for degraded broadcast channels (DBC) in the many-user regime, referred to as the degraded many-broadcast channels.

The rest of the paper is organized as follows: Section~\ref{sec:Notataion} introduces the notation and definitions used throughout the paper. Section~\ref{sec:MatinResults} gives the main results and provides an example Gaussian degraded many-broadcast channel. A sketch of technical proof is presented in the Appendix. Finally, Section~\ref{sec:Conclusion} concludes the paper.

\section{Notations and Definitions}
\label{sec:Notataion}
Uppercase letters represent random variables and the associated realizations are denoted by lowercase letters. The only exception is $M$, which denotes the number of codewords in a codebook. This paper focuses on the class of memoryless channels.

Because the limit of the channel coding rate is ill-suited for our purpose, we shall study the fundamental limit of the message length instead. We begin with, consider a single-user discrete memoryless channel (DMC) described by $P_{Y|X}$ and then generalizes to the many-user models. An $\left(n, M, \epsilon\right)$ code for the channel $P_{Y|X}$ consists of $M$ codewords of blocklength $n$ with error probability no greater than $\epsilon$. Denote the maximal codebook size with error probability no greater than $\epsilon$ and blocklength $n$ as
\begin{align}
M^*(n, \epsilon) = \max\{ M: \exists \text{ an $(n, M, \epsilon)$-code}\}.
\end{align}
Letting $C(n) = nC$, where $C$ is the capacity of the DMC, the classical achievability and strong converse for channel coding can be restated as follows: 1) For all $\delta > 0$, there exists a vanishing sequence $\epsilon_n \to 0$ as $n\to\infty$ such that
\begin{align}
\label{eqn:msg_ach}
\lim_{n\rightarrow\infty}\frac{\log_2 M^*(n, \epsilon_n)}{C(n)} \ge 1-\delta.
\end{align}
2) For all vanishing sequences $\epsilon_n$:
\begin{align}
\label{eqn:msg_converse}
\limsup_{n\rightarrow\infty}\frac{\log_2 M^*(n, \epsilon_n)}{C(n)} \le 1.
\end{align}
Therefore we can define the fundamental limit of the message length, referred to as the message length capacity $\mc{C}$, as a collection of sequences $C(n)$ such that for each $C(n) \in \mc{C}$ the two conditions regarding \eqref{eqn:msg_ach} and \eqref{eqn:msg_converse} hold.
One way of characterizing $\mc{C}$ is to use the asymptotic notation: $\mc{C} = \{nC + o(n)\}$, which is a collection of functions rather than a fixed number.

We can also define the message length capacity alternatively based on the following notion of achievable message length:
\begin{definition}[Achievable message length]
\label{def:message_ach}
A sequence of message length $\lfloor \log_2 M_n \rfloor$ indexed by a positive integer $n$ is said to be asymptotically achievable if there exist a sequence of $(n, M_n, \epsilon_n)$ codes such that  $\epsilon_n\to 0$ as $n\to\infty$. 
\end{definition}

Throughout the paper the notion of achievability is always in the asymptotic sense. An equivalent definition of the message length capacity is the follow:
\begin{definition}[Message length capacity]
\label{def:message_cap}
$\mc{C}$ is a collection of all sequences $C(n)$ such that for any $\delta > 0$, the message length sequence $\lfloor(1-\delta)C(n)\rfloor$ is asymptotically achievable and for any $\delta > 0$, $\lceil(1+\delta)C(n)\rceil$ is not asymptotically achievable.
\end{definition}

A $k$-user memoryless broadcast channel with input $X$ and $k$ outputs $Y_1, \dots, Y_k$ without feedback is described by the conditional probability $P_{Y_1\dots Y_k|X}$. The channel is a degraded broadcast channel if there exists a Markov chain $X-Y_1- \dots -Y_k$ that yields consistent marginals $P_{Y_j|X}, j= 1, \dots, k$ with $P_{Y_1\dots Y_k|X}$~\cite{Bergmans_random_1973}. If the input and output alphabets are finite, then it is a discrete memoryless degraded broadcast channel (DM-DBC). 
The capacity region remains open for general broadcast channels, while the capacity region is known for the class of DBC due to seminal works by Cover~\cite{Cover_broadcast_1972}, Bergman~\cite{Bergmans_random_1973}, and Gallager~\cite{Gallager_degraded_1974}. Since the two-user DBC can be generalized to $k$-user DBC, we only state the result for two-user DBC. 

Let $R_j$ be the rate of the $j$th user and $U$ be an auxiliary random variable. The capacity region of a two-user DM-DBC is known to be the set of rates $(R_1, R_2)$ satisfying
\begin{align}
\label{eqn:CapDBC1}
R_{2} &\leq I(U; Y_2)
\\
\label{eqn:CapDBC2}
R_1 &\leq I(X; Y_1|U)
\end{align}
for some $P_{U, X}$ and $U$ has cardinality no greater than $\min\{|\mc{X}|, |\mc{Y}|\} + 1$.

A degraded many-broadcast channel has a number of receivers $k_n$ growing as a function of the blocklength $n$. The channel consists of an input space $\mc{X}$, a sequence of output spaces $\mc{Y}^{k_n}$ and a sequence of memoryless channels $P_{Y_1 Y_2 \dots Y_{k_n}|X}$ indexed by $n$. Conditioned on channel input $x^n \in\mc{X}^n$, the channel outputs of the memoryless channel has product probability measure  $\prod_{j = 1}^n P_{Y_1 Y_2 \dots Y_{k_n}|X = x_j}$ on the product space $\mc{Y}^{k_n\times n}$. 
\begin{definition}
\label{def:mnbc_code}
 An $\left(n, \{M_{j}\}_{j = 1}^{k_n}, \epsilon\right)$ many-broadcast code for a many-broadcast channel $P_{Y_1 Y_2\dots Y_{k_n}|X}$ consists of
\begin{enumerate}
\item An encoder $f:\mc{W}_1\times\dots\times\mc{W}_{k_n}\mapsto \mc{X}^n$ with $|\mc{W}_j| = M_j$.
\item $k_n$ decoders $g_j: \mc{Y}^n\mapsto \mc{W}_j$, $j = 1, \dots, k_n$, whose error probability satisfies 
\begin{align*}
\max\limits_{j\in\{1, \dots, k_n\}}P[W_j \ne g_j(Y_j^n)] \leq \epsilon,\end{align*}
where $W_1, \dots, W_k$ are independent uniform random variables on their respective alphabets. 
\end{enumerate}
\end{definition}
\begin{figure*}
\setcounter{equation}{7}
\begin{align}
\label{eqn:E0full}
E_{0}(\rho_{j, i}, P_{Y_j|U_i})
= \sum_{u_{i+1}\in\mc{U}_{i+1}}P_{U_{i+1}}(u_{i+1})\left[ \sum_{y_j \in\mc{Y}}
\left( \sum_{u_i \in\mc{U}_i}P_{U_i|U_{i+1}}(u_i|u_{i+1})P_{Y_j|U_{i}}(y_j|u_i)^{\frac{1}{1+\rho_{j,i}}}\right)^{1+\rho_{j,i}}\right]
\end{align}
\vskip-0.5em
\hrulefill
\vskip-0em
\setcounter{equation}{6}
\end{figure*}

Since we are considering a sequence of channels $P_{Y_1 Y_2 \dots Y_{k_n}|X}$ that can, in principle, be defined arbitrarily as the blocklength $n$ increases, it is necessary to restrict our attention to a sequence of ``regular'' channels. We will focus on the class of degraded many-broadcast channels where each marginal channel $P_{Y_j|X}$, $j = 1, 2, \dots, k_n$, for each $n$ follows the same class of distributions. 
One example of a regular sequence of channels is the memoryless Gaussian degraded many-broadcast channel. With a given blocklength $n$, the received signal of the $k_n$ users in some symbol interval are given by:
\begin{align}
\label{eqn:gaussian_mnbc}
Y_j = X + \sigma_{n, j}Z_j, \quad j = 1, \dots, k_n\,,
\end{align} 
where $Z_j\sim\mc{N}(0, 1)$ and $\sigma_{n, j}$ denotes the standard deviation of the noise. The noise levels form a triangular array, and without loss of generality, we assume $\sigma_{n, j} \le \sigma_{n, j+ 1}$ for all $n$ and $j = 1, \dots, k_n-1$. For the Gaussian degraded many-broadcast channel with power constraint $\gamma$, we have an additional constraint in Definition \ref{def:mnbc_code} that every codeword $x^n$ must satisfy $\sum_{i = 1}^{n}x_i^2 \leq n\gamma$.

Following Definition \ref{def:message_ach}, we define the achievable message lengths for many-broadcast channels:
\begin{definition}[Achievable array for many-broadcast]
A triangular array $\lfloor \log_2 M_{n,j} \rfloor$ indexed by integers $n$ and $j = 1, 2, \dots, k_n$ is said to be an asymptotically achievable message length array if there exist a sequence of $\left(n, \{M_{n,j}\}_{j = 1}^{k_n}, \epsilon_n\right)$ many-broadcast code such that $\epsilon_n\to 0$ as $n\to\infty$. 
\end{definition}

To simplify the presentation, we use $M_j$ to denote the element of the triangular array $M_{n,j}$ whenever it is clear that $M_j$ is a function of $n$. Matching the definition for single-user channel, we denote the elements in a triangular array $C_{n, j}$ as $C_j(n)$. 
\begin{definition}[Message length capacity for many-broadcast]
\label{def:mnbc_message_cap}
The message length capacity $\mc{C}$ for a many-broadcast channel is a collection of triangular arrays $C_{j}(n)$ such that for every $\delta > 0$, $\lfloor(1-\delta)C_{j}(n)\rfloor$ is asymptotically achievable and $\lceil(1+\delta)C_{j}(n)\rceil$ is not asymptotically achievable.
\end{definition}

\section{Main Results}
\label{sec:MatinResults}

The scaling of $k_n$ with $n$ distinguishes degraded many-broadcast from the conventional DBC setting. With a fixed number of users in a DBC, time-sharing can achieve a significant amount of the full capacity region, especially when the channel conditions of different users are similar.  In Gaussian many-broadcast channel with $k_n = n$ users, however, applying time-sharing scheme fails to achieve reliable communication: Since each user will only have a single channel use even as $n$ grows arbitrarily large, the error probability cannot vanish with the blocklength in general.

This paper studies the possible growth rate of $k_n$ and the corresponding coding scheme for achieving the message length capacity. A degenerate case in a many-broadcast channel is when the channels to all receivers are statistically identical, so that all users can decode all messages, if any at all. The broadcast is then equivalent to a single-user communication where a user first decodes the commonly decodable message and then finds her own segment in it. In such degenerate case it is trivial to see that $k_n = O(n)$ is achievable and each user can transmit a constant number of bits asymptotically.


The general capacity results for degraded many-broadcast closely resemble the capacity region of the conventional DBC, but the achievability using typical set decoding does not immediately extend to the many-user regime \cite{Chen_gaussian_2013}. An ML decoding analysis for DBC \cite{Gallager_degraded_1974} is used to show the achievability. A sketch of proof is presented in the Appendix.
\subsection{Discrete Memoryless Channels}
Consider a discrete memoryless degraded many-broadcast channel $\{P_{Y_j|X}\}_{j = 1}^{k_n}$. Let $E_0(\rho_{j,i}, P_{Y_j|U_i})$ be given as \eqref{eqn:E0full}, shown at the top of the page. We need the following technical conditions related to the error exponent analysis of \cite{Gallager_degraded_1974}: Assume that the second derivative of $E_0(\rho_{j,i}, P_{Y_j|U_i})$ with respect to $\rho_{j,i}$ is continuously differentiable and bounded, i.e., there is a constant $\kappa < \infty$ such that
\setcounter{equation}{8}
\begin{align}
\label{eqn:diff_condition}
\left|\frac{\partial^2 E_0(\rho_{j,i}, P_{Y_j|U_i})}{\partial \rho_{j,i}^2} \right| < \kappa\,.
\end{align}
 We have the following theorem:
\begin{theorem}
\label{thm:MainResult}
Consider a discrete memoryless degraded many-broadcast channel that satisfies conditions regarding \eqref{eqn:diff_condition}. Let $U_1 = X$, $U_{k_n + 1} = 0$. 
For some admissible $X,U_2, \dots, U_{k_n}$ that form a Markov chain: 
\begin{align}
U_{k_n}-U_{k_n-1}-\dots-U_2-X-Y_1-Y_2-\dots-Y_{k_n},
\end{align} 
let the triangular array $C_j(n)$ be given as:
\begin{align}
\label{eqn:MainResult1}
C_j(n) &= nI(U_{j}; Y_j|U_{j+1}),\quad j = 1, 2,\dots, k_n.
\end{align}
Then $C_j(n)$ is an admissible message length capacity if for all $\delta > 0$ and $j$ (either a fixed finite index or $j= \beta k_n$, $\beta \in (0,1]$), 
\begin{align}
\label{eqn:sufficient_cond}
n\delta I(U_{j}; Y_{j}|U_{j+1})^2 - \log k_n
\end{align}
is unbounded as $n\to\infty$.
\end{theorem}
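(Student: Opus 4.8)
The plan is to verify the two requirements of Definition~\ref{def:mnbc_message_cap} separately: achievability of $\lfloor(1-\delta)C_j(n)\rfloor$ through a random superposition code analysed with Gallager's ML random-coding exponent, and non-achievability of $\lceil(1+\delta)C_j(n)\rceil$ through a Fano-type converse. Write $I_j:=I(U_j;Y_j|U_{j+1})$. For achievability I would build a hierarchical codebook matched to the chain $U_{k_n}-\dots-U_2-X$: generate the coarsest layer $U_{k_n}^n$ i.i.d.\ from $P_{U_{k_n}}$ and superimpose successively finer layers down to $X^n=U_1^n$ via the conditionals $P_{U_i|U_{i+1}}$. Receiver $j$ then performs ML decoding. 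Because $\sigma_{n,j}\le\sigma_{n,j+1}$ makes $Y_j$ less noisy than every $Y_i$ with $i>j$, each coarser layer $U_i$, $i>j$, was designed at rate $(1-\delta)I_i$ for a degraded version of $Y_j$ and is therefore decoded at $Y_j$ with a strictly larger margin; the binding error event at receiver $j$ is thus decoding its own layer $U_j$ given $U_{j+1}$ at rate $R_j=(1-\delta)I_j$, for which \cite{Gallager_degraded_1974} bounds the error by $\exp(-n[E_0(\rho,P_{Y_j|U_j})-\rho R_j])$ for any $\rho\in[0,1]$.

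The crux is a lower bound on this exponent that is uniform across users even when $I_j$ vanishes. Here I would invoke \eqref{eqn:diff_condition}: since $E_0(0,\cdot)=0$ and $\partial_\rho E_0|_{\rho=0}=I_j$, a second-order Taylor expansion with $|\partial_\rho^2 E_0|<\kappa$ gives $E_0(\rho,\cdot)\ge\rho I_j-\tfrac12\kappa\rho^2$, so $E_0-\rho R_j\ge\rho\delta I_j-\tfrac12\kappa\rho^2$; choosing $\rho=\min\{1,\delta I_j/\kappa\}$ yields an exponent at least $\delta^2 I_j^2/(2\kappa)$ once $\delta I_j\le\kappa$, which holds for large $n$ in the many-user regime. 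Taking the union bound over the at most $k_n$ layers decoded at receiver $j$ and over the $k_n$ receivers, a standard existence argument produces a code with $\epsilon_n\le k_n\exp(-n\,\delta^2 I_j^2/(2\kappa))$. Since the family of hypotheses \eqref{eqn:sufficient_cond} over all $\delta>0$ is invariant under positive rescaling of $\delta$, the exponent $n\,\delta^2 I_j^2/(2\kappa)-\log k_n$ is unbounded, hence $\epsilon_n\to0$ and $\lfloor(1-\delta)C_j(n)\rfloor$ is achievable.

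For the converse I would show that attempting $\log_2 M_j\ge(1+\delta)nI_j$ bits for user $j$ precludes a vanishing error. Applying Fano's inequality at receiver $j$ and single-letterizing along the degraded Markov chain exactly as in the converse for the DM-DBC region \eqref{eqn:CapDBC1}--\eqref{eqn:CapDBC2} gives $\log_2 M_j\le nI_j+1+\epsilon_n\log_2 M_j$, whence $\delta nI_j\le 1+\epsilon_n(1+\delta)nI_j$. Since \eqref{eqn:sufficient_cond} together with $\log k_n\ge0$ forces $nI_j^2\to\infty$, and $I_j$ is bounded, we get $nI_j\to\infty$, so this inequality fails for every $\epsilon_n\to0$; thus $\lceil(1+\delta)C_j(n)\rceil$ is not achievable and $C_j(n)$ is admissible.

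I expect the main obstacle to lie in the achievability step of controlling the maximal error over a number of receivers that grows with $n$: the per-user exponent $\delta^2 I_j^2$ may be arbitrarily small, so its race against the $\log k_n$ union-bound penalty is delicate, and it is precisely the uniform control of the Taylor remainder furnished by \eqref{eqn:diff_condition} that lets the single-letter exponent bound hold simultaneously across all layers of the growing superposition codebook.
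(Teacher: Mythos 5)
Your proposal is correct and follows essentially the same route as the paper: superposition coding with successive ML decoding, a second-order Taylor expansion of Gallager's $E_0$ exploiting the bound \eqref{eqn:diff_condition} to get a per-layer exponent of order $\delta^2 I(U_j;Y_j|U_{j+1})^2/\kappa$, a union bound over the growing number of layers absorbed by condition \eqref{eqn:sufficient_cond}, and a Fano-based converse with the standard auxiliary-variable identification. Your choice $\rho=\min\{1,\delta I_j/\kappa\}$ versus the paper's $\rho_{j,j}=2\delta(1-\delta)I_j/\kappa$ and your explicit note that \eqref{eqn:sufficient_cond} forces $nI_j\to\infty$ in the converse are only minor (in fact slightly more careful) variants of the same argument.
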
	

The first term in \eqref{eqn:sufficient_cond} is related to the single-user channel error exponent of ML decoding, where as the second term is due to the union bound over $k_n$ users' error event. A sketch of the proof is given in the Appendix. 

Using Gaussian random codebook for the Gaussian degraded many-broadcast channel, \eqref{eqn:diff_condition} holds (see for example \cite{chen2014gaussian}) and hence the proof is similar to the DMC case. Note that we still need to check \eqref{eqn:sufficient_cond} for a given triangular array of the noise levels and a power allocation among the users.

\subsection{Gaussian Degraded Many-Broadcast with $k_n = O(n)$}
We study an example of Gaussian degraded many-broadcast channel in this subsection. The channel model for a Gaussian degraded many-broadcast with a total power constraint $\gamma$ and $k_n$ users is given as by \eqref{eqn:gaussian_mnbc}. Let $\alpha = (\alpha_1, \dots, \alpha_{k_n})$ be a non-negative vector such that $\sum_{j = 1}^{k_n}\alpha_j = 1$. We choose  $P_{U_j} \sim \mc{N}(0, \alpha_j\gamma)$ independent of each other and let $X = \sum_{j = 1}^{n} U_j$. For the many-user regime we consider the case with uniform power allocation. The same derivation in the following applies to the case when $\alpha_j = O(1/k_n)$. We assume that the triangular array of the noise variances satisfies $\sigma_{n, j+1} \ge \sigma_{n, j} \ge \epsilon > 0 $ and 
$\lim\limits_{n\to\infty}\sigma_{n, k_n}= \sigma < \infty$.

For $k_n = o(\sqrt{n})$, e.g., $k_n = n^{1/3}$, we can verify that \eqref{eqn:sufficient_cond} holds by Taylor's expansion:
\begin{align}
\sqrt{n}\delta I(U_j;Y_j|U_{j+1})
&=\frac{\sqrt{n}\delta}{2}\log\left(1+\frac{\frac{\gamma}{n^{1/3}}}{\sigma_{n,j}^2+\frac{(j-1)\gamma}{n^{1/3}}}\right)
\\
&=\frac{n^{1/6}\delta\gamma/2}{\sigma_{n, j}^2+ \frac{(j-1)\gamma}{n^{1/3}}} - O(n^{-1/6})\,.
\end{align}
In order for superposition coding and successive decoding for $k_n = O(n)$ to work, we need an additional step of grouping, referred to as grouped superposition coding. The proof of the following theorem presents the grouping procedure. 
\begin{theorem}
\label{thm:ManyGaussian}
Assume that the degraded Gaussian many-broadcast channel satisfies 
\begin{align}
\label{eqn:dense_assumption}
|\sigma_{n, j} - \sigma_{n, j+o(\sqrt{n})}| = o(1).  
\end{align}
The following triangular array is in the message length capacity of a Gaussian many-broadcast channel with $k_n = O(n)$, power constraint $\gamma$ and uniform power allocation:
\begin{align}
\label{eqn:CapAWGN}
C_j(n) = \frac{n}{2}\log_2\left(1+\frac{\gamma/k_n}{\sigma_{n,j}^2+(j-1)\gamma/k_n}\right)\,,
\end{align}
where $j$ is either a fixed finite index or $j = \beta k_n$, $\beta \in (0,1]$.
\end{theorem}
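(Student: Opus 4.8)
The plan is to prove the two requirements of Definition~\ref{def:mnbc_message_cap} separately: achievability of $\lfloor(1-\delta)C_j(n)\rfloor$ by a \emph{grouped} superposition code, and non-achievability of $\lceil(1+\delta)C_j(n)\rceil$ by the strong converse for the degraded broadcast channel. The reason the plain $k_n$-layer code of Theorem~\ref{thm:MainResult} is inadequate is that with $k_n=O(n)$ and uniform power each user's SINR is $s_j=\Theta(1/k_n)=\Theta(1/n)$, so its layer carries only $C_j(n)=\Theta(1)$ bits and its single-user exponent satisfies $nE_r=\Theta(ns_j)=\Theta(1)$; the union bound $k_n e^{-nE_r}$ over the $k_n$ users then does not vanish. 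Grouping is introduced precisely to force this exponent to diverge.

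For achievability I would fix a group size $g_n$ in the window $\log n\ll g_n=o(\sqrt n)$ (e.g.\ $g_n=n^{1/4}$), partition the users into $G_n=k_n/g_n$ consecutive groups $\mc{G}_1,\dots,\mc{G}_{G_n}$, and replace the $k_n$-layer code by a $G_n$-layer Gaussian superposition code in which layer $m$ has power $g_n\gamma/k_n$ and carries the concatenated messages of all users in $\mc{G}_m$. The dense assumption~\eqref{eqn:dense_assumption} makes $\sigma_{n,j}$ constant up to $o(1)$ across each group, so every user in $\mc{G}_m$ sees essentially the same effective channel, can decode the common group-$m$ codeword, and then reads off its own $\lfloor(1-\delta)C_j(n)\rfloor$-bit segment. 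I would then check the rate budget: using $\log(1+x)=x+o(x)$ for the vanishing SINRs and the near-constancy of the denominator $\sigma_{n,j}^2+(j-1)\gamma/k_n$ over a window of width $o(\sqrt n)$, the individual lengths sum to $\sum_{j\in\mc{G}_m}C_j(n)=nI(\tilde U_m;Y\,|\,\tilde U_{m+1})\,(1+o(1))$, where $\tilde U_m$ is the group-$m$ layer, so operating each group codeword just below its own mutual information exactly accommodates its $g_n$ messages.

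The crux is the error-exponent bound for the grouped code, and here I would deliberately \emph{not} invoke the sufficient condition~\eqref{eqn:sufficient_cond}. That condition comes from bounding the dispersion $-\partial^2E_0/\partial\rho^2$ by the constant $\kappa$, which is far too loose once the SINR vanishes; indeed $nI(\tilde U_m;Y|\tilde U_{m+1})^2-\log G_n\to-\infty$ for $g_n=o(\sqrt n)$, so~\eqref{eqn:sufficient_cond} fails even at the group level. Instead I would expand Gallager's exponent $E_r(R)=\max_{\rho\in[0,1]}\{E_0(\rho)-\rho R\}$ about $\rho=0$ keeping the true curvature: with $E_0'(0)=I$ and $E_0''(0)=-V$ and, crucially, $V=\Theta(s_m)$ in the low-SINR regime, the optimizer is $\rho^\star=\delta I/V=\Theta(\delta)$ and $E_r\big((1-\delta)I\big)=\Theta(\delta^2 s_m)$ rather than $\Theta(\delta^2I^2)$. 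Since $s_m=\Theta(g_n/k_n)=\Theta(g_n/n)$ uniformly in $m$ (using $\epsilon\le\sigma_{n,j}\le\sigma$), this gives $nE_r=\Theta(g_n)$ for every layer, and a union bound over the at most $k_nG_n$ decoding events yields a total error of order $\mathrm{poly}(n)\,e^{-\Theta(g_n)}\to0$ exactly because $g_n\gg\log n$. The uniformity of the $\Theta(g_n)$ exponent over $m$ simultaneously covers a fixed finite $j$ (which lies in $\mc{G}_1$) and $j=\beta k_n$ (which lies in the bulk group $m\approx\beta G_n$). Establishing this sharp $\Theta(s_m)$ scaling, and verifying that the $o(1)$ channel mismatch inside a group does not erode it, is the main obstacle and is what pins down the two-sided window on $g_n$.

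The converse is the easier half and follows from the strong converse for the Gaussian degraded broadcast channel adapted to a growing number of users. For uniform power the Gaussian superposition input is extremal, so the per-user bounds $R_j\le I(U_j;Y_j|U_{j+1})$ in~\eqref{eqn:CapDBC1}--\eqref{eqn:CapDBC2} are all tight and the tuple $\big(C_j(n)/n\big)_j$ lies on the boundary of the capacity region. Because every achievable tuple lies in the comprehensive region and each $C_j(n)>0$, the strictly larger tuple $\big((1+\delta)C_j(n)\big)_j$ lies outside it, so by the strong converse any code with those lengths has error bounded away from $0$. Hence $\lceil(1+\delta)C_j(n)\rceil$ is not asymptotically achievable, and together with the achievability this places $C_j(n)$ in the message length capacity.
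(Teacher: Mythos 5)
Your proposal is correct in substance but takes a genuinely different route from the paper, and the difference is instructive. The paper's proof uses the \emph{opposite} grouping regime from yours: it fixes $a\in(0,1/2)$ and forms only $n^{a}$ groups, each of size $n^{1-a}=\omega(\sqrt n)$, precisely so that the induced super-user channel has $o(\sqrt{n})$ users. Each super-user layer then has mutual information $\Theta(n^{-a})$, so $n\delta I^2=\Theta(n^{1-2a})\to\infty$ while $\log(n^{a})=a\log n$, i.e., the sufficient condition \eqref{eqn:sufficient_cond} \emph{does} hold at the group level, and Theorem~\ref{thm:MainResult} (with its crude constant-$\kappa$ curvature bound) is invoked as a black box; the only remaining work is the Taylor/quantization argument transferring $(1-\delta)\tilde C_j(n)$ to $(1-\delta)C_j(n)$. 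You instead take many small groups, $\log n\ll g_n\ll\sqrt n$, under which \eqref{eqn:sufficient_cond} genuinely fails (your computation $nI^2=\Theta(g_n^2/n)\to 0$ is right), and you compensate by redoing Gallager's analysis with the true low-SNR curvature $-E_0''=\Theta(s_m)$ in place of $\kappa$, getting a per-layer exponent $\Theta(\delta^2 g_n)$ and finishing with a polynomial union bound. This crux claim is asserted rather than proven in your sketch, but it is correct and checkable: for a Gaussian layer with Gaussian interference treated as noise one has explicitly $E_0(\rho)=\frac{\rho}{2}\log\left(1+\frac{s_m}{1+\rho}\right)$, whose second derivative is $\Theta(s_m)$ uniformly on $[0,1]$, so $E_r((1-\delta)I)=\Theta(\delta^2 s_m)$ as you claim. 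What your route buys is noteworthy: your group width $o(\sqrt n)$ sits exactly inside the window of the dense assumption \eqref{eqn:dense_assumption}, whereas the paper applies that assumption across groups of width $n^{1-a}=\omega(\sqrt n)$, wider than the stated hypothesis; what the paper's route buys is brevity, since no new exponent analysis is needed. One caveat on your converse: you invoke a strong converse for the Gaussian degraded broadcast channel with growing $k_n$, which is not an off-the-shelf result and is more than the definition demands. Non-achievability in Definition~\ref{def:mnbc_message_cap} only requires that the error probability fail to vanish, so the per-user weak (Fano plus entropy-power-inequality) converse, which is exactly what the paper's Appendix uses, suffices and is the safer argument.
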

\begin{proof}
The converse holds as in the case for DMC. To simplify the notation, assume without loss of generality that $k_n = n$. Let $a \in (0,1/2)$ and $\bar{a} = 1-a$. Sequentially group $n^{\bar{a}}$ users to share the same codebook and consider each group as a super user indexed by $i = 1, 2, \dots, n^{a}$. The $i$th super user generates an i.i.d. Gaussian codebook according to $\mc{N}(0, \gamma/n^{a})$ for the channel $P_{Y_{in^{\bar{a}}}|X}$. In other words, the super user generates a codebook with message length that is suitable for the worst channel of the group. Therefore every user in the same group can decode the same codeword and can find segments of the message that belongs to oneself.  

The grouped superposition coding scheme yields an equivalent Gaussian many-broadcast channel with $n^{a}$ users: $\tilde{Y}_{i} = X + \tilde{\sigma}_{n, i}\tilde{Z}_{i}$ where $\tilde{\sigma}_{n, i} = \sigma_{n, i n^{\bar{a}}}$, $\tilde{Z}_{i} = Z_{in^{\bar{a}}}$. Hence the message length capacity of the super users includes the following triangular array since \eqref{eqn:sufficient_cond} holds:
\begin{align}
nI(\tilde{U}_{i};\tilde{Y}_{i}|\tilde{U}_{i+1}) 
\label{eqn:group_msglength}
&= \frac{n}{2}\log\left(1+\frac{\frac{\gamma}{n^{a}}}{\sigma_{n,i n^{\bar{a}}}^2 + \frac{(i-1)\gamma}{n^{a}}}\right).
\end{align}

Equally distributing the message length to each user in a group, we obtain a quantized triangular array where every $n^{\bar{a}}$ users have the same message length. In other words, for all $\delta > 0$ the triangular array $(1-\delta)\tilde{C}_j(n)$ is achievable where $\tilde{C}_j(n)$ is given as
\begin{align}
\label{eqn:group_array}
\tilde{C}_j(n)=\frac{n^a}{2}\log\left(1+\frac{\frac{\gamma}{n^a}}{\sigma_{n,\lceil j/n^{\bar{a}}\rceil n^{\bar{a}}}^2 + \frac{(\lceil j/n^{\bar{a}}\rceil-1)\gamma}{n^a}}\right).
\end{align}

The last step is to show that $(1-\delta)\tilde{C}_j(n)$ achievable implies that $(1-\delta)C_j(n)$ is also achievable asymptotically. By Taylor's expansion we have:
\begin{align}
\tilde{C}_j(n) = \frac{\gamma/2}{\sigma_{n,\lceil j/n^{\bar{a}}\rceil n^{\bar{a}}}^2 + \frac{(\lceil j/n^{\bar{a}}\rceil-1)\gamma}{n^a}} + O(n^{-a}).
\end{align}
Hence the difference between $\tilde{C}_{i n^{\bar{a}}}(n)$ and $C_{i n^{\bar{a}}}(n)$ is in the order of $O(n^{-a})$. Therefore $\forall \delta > 0$, $\exists\delta' \in (0, \delta)$ and a $n_0$ large enough such that $(1-\delta')\tilde{C}_{in^{\bar{a}}}(n) \ge (1-\delta)C_{in^{\bar{a}}}(n)$ for all $n \ge n_0$. 
Finally, by \eqref{eqn:dense_assumption} we have for $j = in^{\bar{a}}$:
\begin{align}
|C_{j - n^{\bar{a}} + 1}(n) - C_{j}(n)| = o(1)\,.
\end{align}
In other words, the quantization error goes to zero. Hence $\forall\delta > 0$, $(1-\delta)C_{j}(n)$ is achievable asymptotically, finishing the proof. 
\end{proof}

As a numerical example, let $k_n = cn$ for some $c > 0$ and let $\sigma_{n,j}^2 = 2^{j/k_n}$. Some asymptotic results follow immediately from Theorem~\ref{thm:ManyGaussian} assuming $\alpha_j = 1/k_n$, i.e., a degraded Gaussian many-broadcast channel with uniform power allocation. Taking the limit in $n$ for any fixed $j< \infty$ yields
\begin{align}
\lim_{n\to\infty}C_j(n) = \frac{\gamma\log_2(e)}{2c}\, \text{ bits }.
\end{align}
For $j = \beta k_n, \beta \in (0, 1]$ we have
\begin{align}
\lim_{n\to\infty}C_{\beta k_n}(n) = \frac{\gamma\log_2(e)}{2c(2^{\beta} + \beta \gamma)}\, \text{ bits }.
\end{align}

To provide a numerical example, consider the case when $k_n = n/4$, $\alpha_j = 1/k_n$. Figure~\ref{fig:GaussianMnBC} shows the plot of $C_{\beta k_n}(n)$ for a fixed $n = 1000$ and the limit as $n\to\infty$. The curve shows that asymptotically, all users can receive more than $2$ bits and $20\%$ of the users can receive more than $10$ bits. 

\begin{figure}
\centering
\includegraphics[width=0.45\textwidth]{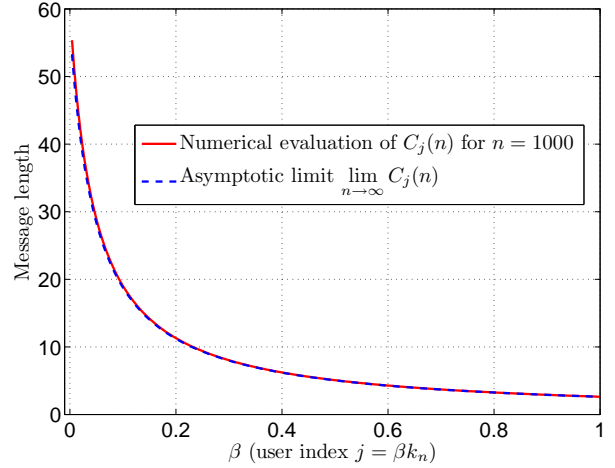}
\caption{Gaussian many-broadcast example for $k_n = n/4$, $\alpha_j = 1/k_n$ and $\sigma_{n,j}^2 = \exp\{j/k_n\}$. 
 }
\label{fig:GaussianMnBC}
\vskip-1em
\end{figure}

\section{Concluding Remarks}
\label{sec:Conclusion}
This paper proposed the degraded many-broadcast channel models and a new notion of capacity for such models. The capacity of discrete memoryless degraded many-broadcast and Gaussian degraded many-broadcast are presented. An example of Gaussian degraded many-broadcast channel where each user achieves strictly positive message length is provided.

\section*{Appendix: Sketch of Proof}
\label{sec:Proof}
\subsection{Converse}
The converse follows from the same converse argument of DBC converse with fixed number of users: By Fano's inequality we have for all $j$:
\begin{align}
\log M_j  &= H(W_j)
\\
&= I(W_j; Y_j^n) + H(W_j|Y_j^n)
\\
&\leq I(W_j;Y_j^n) + \epsilon^{(j)}_n\log M_j + H_2(\epsilon_n^{(j)})\,,
\end{align}
where $H_2(x) = -x\log x - (1-x)\log(1-x)$ is the binary entropy.
Following the standard technique for choosing the auxiliary random variables along with a uniform time-sharing random variable, we have
\begin{align}
\log M_j \leq n I\left(U_j; Y_j \big\vert U_{j+1}\right) + \epsilon^{(j)}_n\log M_j + H_2(\epsilon_n^{(j)})\,.
\end{align}
If $M_j$ is scaling with $n$ such that for $n$ large enough,
\begin{align}
\log M_j \geq (1+\delta)nI(U_j;Y_j|U_{j+1})\,,
\end{align}  
then we have
\begin{align}
\epsilon^{(j)}_n &\geq 1 - \frac{H_2(\epsilon_n^{(j)})}{\log M_j} - \frac{nI(U_j;Y_j|U_{j+1})}{\log M_j}
\\
&\geq 1 - \frac{H_2(\epsilon_n^{(j)})}{(1+\delta)nI(U_j;Y_j|U_{j+1})} - \frac{1}{1+\delta}
\\
&=1- \frac{H_2(\epsilon_n^{(j)})+nI(U_j;Y_j|U_{j+1})}{(1+\delta)nI(U_j;Y_j|U_{j+1})}\,.
\end{align}
Suppose that $\epsilon_n^{(j)} \to 0$ as $n\to\infty$, then taking limit on the both side in $n$ yields $\epsilon_n^{(j)} \geq \delta/(1+\delta) > 0$, a contradiction.  This finishes the converse for $k_n = O(n)$. 

\subsection{Achievability for Degraded Many-Broadcast}
For the stated condition for degraded many-broadcast channels, the following proof follows closely to the proof in \cite{Gallager_degraded_1974}:

\subsubsection{Codebook generation} Let $\ms{U} = (U_2, \dots, U_{k_n})$ and $X = U_1$. Fix a distribution $P_{\ms{U},X}$ satisfying
\begin{equation}
P_{\ms{U},X} = P_{U_{k_n}} \prod_{j = 1}^{k_n-1} P_{U_{j}|U_{j+1}}\,.
\end{equation}
Randomly generate a i.i.d. length-$n$ sequence $u_{k_n}(w_{k_n})$ according to $P_{U_{k_n}}$ for each message in $w_{k_n} \in \mc{W}_{k_n}$. For each of the generated sequence $u_j(w_{j})$ randomly generate i.i.d. layers of satellite sequences $u_{j-1}\left(w_{j-1}, u_k(w_{j})\right)$ according to $P_{U_{j-1}|U_{j}}$ for $j = k_n, \dots, 3$. Finally generate the i.i.d. input sequences $x^n(w_1, w_2)$ randomly according to $P_{U_1|U_{2}} = P_{X|U_{2}}$.
\subsubsection{Encoding and decoding} To send the message set $(w_1, w_2, \dots, w_{k_n})$, first encode $w_{k_n}$ to $u_{k_n}^n(w_{k_n})$. Then for $j = k_n , \dots, 3$, sequentially generate $u_{j-1}^n\left(w_{j-1}, u_j^n(w_{j})\right)$. Finally the transmitter sends $x^n\left(w_1, u_{2}^n(w_2)\right)$. The $j$th receiver performs ML decoding and successively cancels layers of satellite codewords starting from the $k_n$ th codeword to the $j$th codeword, $j = 1, \dots, k_n$.
\subsubsection{Error analysis} 
Let the error probability of the $j$th user be $\epsilon_n^{(j)}$. Using union bounds on the error events of the successive decoding rule, we have: 
\begin{align}
\label{eqn:UnionBound}
\epsilon_n^{(j)} \leq \sum_{i = j}^{k_n} \epsilon_n^{(j,i)}\,,
\end{align}
where $\epsilon_n^{(j,i)}= P_{Y_j}[\hat{w}_i \ne w_i]$, i.e., the decoding error probability of the $i$th user's codeword over the $j$th channel. Using the error exponent analysis in \cite{Gallager_degraded_1974} and \cite{Gallager_information_1968}, we have for any $\rho_{j,i} \in [0, 1], j = 1, \dots, k_n$ and $i = j, j+1, \dots, k_n$:
\begin{align}
\epsilon_n^{(j,i)} \leq \exp\left\{- E_{j,i}^{(n)}(\rho_{j, i}, M_i)\right\}\,,
\end{align}
where $E_{j,i}^{(n)}(\rho_{j, i}, M_i)$ for the $i$ message over the $j$th channel is given as (recall that $U_1 = X$ and $U_{k_n+1} = 0$): 
\begin{align}
\label{eqn:exponent}
E_{j,i}^{(n)}(\rho_{j, i}, M_i) = nE_{0, j, i}(\rho_{j, i}, M_i)- \rho_{j, i} \log M_i\,,
\end{align}
where $E_{0}(\rho_{j, i}, P_{Y_j|U_j})$ is given in \eqref{eqn:E0full}.
Optimizing \eqref{eqn:exponent} over $\rho_{j, i} \in [0,1]$ we obtain $n$ times the error exponent in \cite{Gallager_degraded_1974}:
\begin{align}
E_{j,i}^{*(n)}(M_j) = \sup\limits_{\rho_{j,i} \in [0, 1]}E_{j,i}^{(n)}(\rho_{j, i}, M_i)\,.
\end{align}
Since the channel is degraded, it suffices to focus on the admissible scaling of $M_j$ such that $E_{j,j}^{*(n)}(M_j)$ tend to infinity as $n\to\infty$. Similar to the single user channel coding, we have the following properties
\begin{align}
E_0(\rho_{j, j}, P_{Y_j|U_j}) &\ge 0 ; \,\rho_{j, j}\ge 0\,.
\\
-\kappa\le\frac{\partial^2 E_{0}(\rho_{j, j}, P_{Y_j|U_j})}{\partial \rho_{j, j}^2} &\leq 0; \,\rho_{j, j}\in [0, 1]\,.
\\
\left.\frac{\partial E_{0}(\rho_{j, j}, P_{Y_j|U_j})}{\partial \rho_{j, j}}\right\vert_{\rho_{j, j} = 0} &= I(U_j;Y_j|U_{j+1})\,.
\end{align}
The lower bound of the second property is from the assumption \eqref{eqn:diff_condition}. Fix a $\delta > 0$, let $\log M_j \leq n(1-\delta)I(U_j;Y_j|U_{j+1})$ for all $n$ large enough. For any $\rho_{j,j} \in [0,1]$, we have:
\begin{align}
E_{j,j}^{*(n)}(M_j) \ge nE_{0}(\rho_{j,j}, P_{Y_j|U_j})- \rho_{j,j} \log M_j\,.
\end{align}
By Taylor's expansion of $E_{0}(\rho, P_{Y_j|U_j})$ around the origin, there exists a $\rho_{j}' \in (0, \rho_{j,j})$ such that 
\begin{align}
&nE_{0}(\rho_{j,j}, P_{Y_j|U_j})- \rho_{j,j} \log M_j \nonumber
\\
&= n\left(\rho_{j,j}\delta I(U_j;Y_j|U_{j+1}) + \frac{\rho_{j,j}^2}{2}\frac{\partial^2 E_0(\rho_j', P_{Y_j|U_j})}{\partial\rho^2}\right)
\\
&\ge
n\rho_{j,j}\delta I(U_j;Y_j|U_{j+1})\left(1 - \frac{\kappa\rho_{j,j}}{2\delta I(U_j;Y_j|U_{j+1})}\right) \,.
\end{align}
Choosing $\rho_{j,j} = \frac{2\delta(1-\delta) I(U_j;Y_j|U_{j+1})}{\kappa}$ we have
\begin{align}
\label{eqn:rhojj_condition}
E_{j,j}^{*(n)}(M_j) \ge n\delta' I(U_j;Y_j|U_{j+1})^2 \,,
\end{align}
where $\delta' = \delta^3(1-\delta)/\kappa$ is a positive constant.
Since there are at most $k_n$ terms in \eqref{eqn:UnionBound}, each $\epsilon_n^{(j)}$ vanishes to zero in the order of $O(k_n\exp\{-n\delta'I(U_j;Y_j|U_{j+1})^2\})$, which goes to zero by the assumption $n\delta'I(U_j;Y_j|U_{j+1})^2 - \log k_n \to\infty$ as $n\to\infty$. This finishes the proof of the achievability. \QED

\bibliographystyle{IEEEtran}
\bibliography{IEEEabrv,ManyBroadcast}

\begin{thebibliography}{1}
\providecommand{\url}[1]{#1}
\csname url@samestyle\endcsname
\providecommand{\newblock}{\relax}
\providecommand{\bibinfo}[2]{#2}
\providecommand{\BIBentrySTDinterwordspacing}{\spaceskip=0pt\relax}
\providecommand{\BIBentryALTinterwordstretchfactor}{4}
\providecommand{\BIBentryALTinterwordspacing}{\spaceskip=\fontdimen2\font plus
\BIBentryALTinterwordstretchfactor\fontdimen3\font minus
  \fontdimen4\font\relax}
\providecommand{\BIBforeignlanguage}[2]{{%
\expandafter\ifx\csname l@#1\endcsname\relax
\typeout{** WARNING: IEEEtran.bst: No hyphenation pattern has been}%
\typeout{** loaded for the language `#1'. Using the pattern for}%
\typeout{** the default language instead.}%
\else
\language=\csname l@#1\endcsname
\fi
#2}}
\providecommand{\BIBdecl}{\relax}
\BIBdecl

\bibitem{cover1991elements}
T.~M. Cover and J.~A. Thomas, \emph{{Elements of Information Theory}},
  2nd~ed.\hskip 1em plus 0.5em minus 0.4em\relax {Wiley-Interscience}, 1991.

\bibitem{Chen_gaussian_2013}
X.~Chen and D.~Guo, ``Gaussian many-access channels: Definition and symmetric
  capacity,'' in \emph{Proc. IEEE Information Theory Workshop}, Seville, Spain,
  2013, pp. 1--5.

\bibitem{chen2014gaussian}
------, ``Gaussian many-access channels with random transmitter activities,''
  in \emph{Proc. IEEE Int. Symp. Inf. Theory}, Honolulu, HI, 2014.

\bibitem{Bergmans_random_1973}
P.~Bergmans, ``Random coding theorem for broadcast channels with degraded
  components,'' \emph{IEEE Trans. Inf. Theory}, vol.~19, no.~2, pp. 197--207,
  1973.

\bibitem{Cover_broadcast_1972}
T.~M. Cover, ``Broadcast channels,'' \emph{{IEEE} Trans. Inf. Theory}, vol.~18,
  no.~1, Jan. 1972.

\bibitem{Gallager_degraded_1974}
R.~G. Gallager, ``Capacity and coding for degraded broadcast channels,''
  \emph{Probl. Pered. Inform.}, vol. 10:3, pp. 3 --14, Jan. 1974.

\bibitem{Gallager_information_1968}
------, \emph{Information Theory and Reliable Communication}.\hskip 1em plus
  0.5em minus 0.4em\relax New York: Wiley, Jan 1968.

\end{thebibliography}


\end{document}